\documentclass[conference]{IEEEtran}
\usepackage{cite}
\usepackage[pdftex]{graphicx}
\usepackage[pdftex]{color}
\graphicspath{/}
\usepackage[ruled,vlined,linesnumbered]{algorithm2e}
\usepackage[cmex10]{amsmath}
\usepackage{algpseudocode}
\usepackage{array}
\usepackage{url}
\usepackage{bm} 
\usepackage{amssymb}
\usepackage{epstopdf}
\usepackage{amsthm}
\newtheorem{thm}{Theorem}[section]
\newtheorem{cor}[thm]{Corollary}
\newtheorem{lem}[thm]{Lemma}

\theoremstyle{remark}

\theoremstyle{definition}

\newcommand{\C}[1]{C\left( #1 \right)}


\hyphenation{op-tical net-works semi-conduc-tor sche-du-le}

\begin{document}
\title{Network-Assisted Device-to-Device (D2D) Direct Proximity Discovery with Underlay Communication}

\author{\IEEEauthorblockN{Nuno K. Pratas, Petar Popovski}
\IEEEauthorblockA{Department of Electronic Systems, Aalborg University, Denmark \\
Email: \{nup,petarp\}@es.aau.dk}}

\maketitle

\begin{abstract}

Device-to-Device communications are expected to play an important role in current and future cellular generations, by increasing the spatial reuse of spectrum resources and enabling lower latency communication links. This paradigm has two fundamental building blocks: (i) proximity discovery and (ii) direct communication between proximate devices.
While (ii) is treated extensively in the recent literature, (i) has received relatively little attention. 
In this paper we analyze a network-assisted underlay proximity discovery protocol, where a cellular device can take the role of: \emph{announcer} (which announces its interest in establishing a D2D connection) or \emph{monitor} (which listens for the transmissions from the announcers).
Traditionally, the announcers transmit their messages over dedicated channel resources.
In contrast, inspired by recent advances on receivers with multiuser decoding capabilities, we consider the case where the announcers underlay their messages in the downlink transmissions that are directed towards the monitoring devices.
We propose a power control scheme applied to the downlink transmission, which copes with the underlay transmission via additional power expenditure, while guaranteeing both reliable downlink transmissions and underlay proximity discovery.

\end{abstract}



\section{Introduction}
\label{sec:Introduction}

Network assisted \emph{Device-to-Device} (D2D) communications, stands for the network assisted establishment of direct communication links between proximate cellular devices~\cite{6163598,DBLP:journals/corr/AsadiWM13,6807945,6231164}; and is expected to play an important role on 5G systems~\cite{Fantastic5G}.
Prior to the establishment of these links, the network needs first to become aware if these devices want to communicate and if they are in close proximity to each other.
In 3GPP -- where D2D falls within the Proximity Services (ProSe) umbrella~\cite{3GPPTS23.303} -- this information can be inferred from the complementary proximity discovery mechanisms taking place both in the Evolved Packet Core (EPC) and in the radio access network (E-UTRAN)~\cite{6163598}.
In the EPC -- denoted as EPC-level discovery -- the proximity between devices can be extracted from the device's periodic location updates, while the interest to establish direct communication between devices can either be triggered by the devices or from the network side.
The latter, can be triggered from the monitoring of the ongoing communications between devices taking place over the network infrastructure~\cite{6163598}.
On the other hand, in the E-UTRAN -- denoted as direct discovery -- the proximity is established by direct communication between the interested devices.
There are two direct proximity discovery protocols being currently specified in 3GPP~\cite{3GPPTS23.303}:
Model A ("\emph{I am here}") and model B ("\emph{Who is there?}").

In the model A discovery protocol, as illustrated in Fig.~\ref{fig:D2DDiscoveryScenario}(a), the cellular devices can take either the role of \emph{announcer} or of \emph{monitor}.
Devices in the announcer role, broadcast an announcing message with information that could be of interest to the monitors within its proximity region.
While, devices in the monitor role, listen for certain information of interest, transmitted by the announcers within their vicinity.
The announcer transmission, is dimensioned with fixed power and rate such that the devices within a \emph{proximity region} will detect it with high probability.

In the model B discovery protocol, as illustrated in Fig.~\ref{fig:D2DDiscoveryScenario}(b), the cellular devices can take either the role of \emph{discoverer} or of \emph{discoveree}.
In this model, the discoverer receives the information from the nearby discoverees.
In contrast with model A, this protocol is based on the aggregation of the transmissions from the nearby cellular devices, which are not necessarily known a priori by the network to be within the proximity region\footnote{The associated radio resource management problem is the same as in random access settings without a priori knowledge, i.e. given an unknown number of arrivals what should the number of allocated resources be?}.
\begin{figure}[tb]
    \begin{center}
        \includegraphics[width=\linewidth]{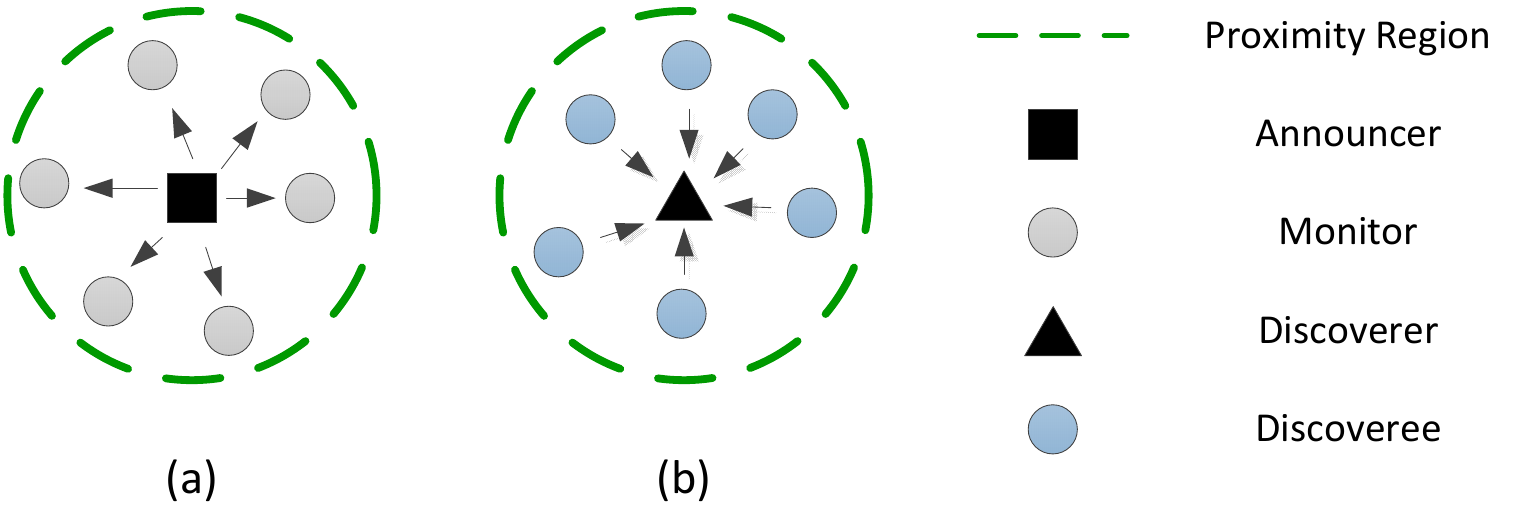}
    \end{center}
    \caption{Direct proximity discovery within a device proximity region: (a) Model A ("\emph{I am here}") and (b) Model B ("\emph{Who is there?}").}
    \label{fig:D2DDiscoveryScenario}
\end{figure}

In the remaining of the paper, we will focus our analysis on the Model A discovery protocol setting.
The radio resource management approach in place for Model A~\cite{3GPPTR36.843}, is to schedule dedicated resources to the announcer transmission, i.e. resources orthogonal to the ones assigned to the other active cellular links.
In contrast, we consider the case where the announcer transmission is scheduled to underlay the ongoing downlink transmissions to the monitors.
Therefore, creating a multiple access channel~\cite{gamal2011network}, composed by the announcer and the base station transmissions.

The occurrence of D2D underlay communication links, although previously considered in the literature~\cite{Pratas2014,PekkaJANIS2009,5350367,5073734}, have so far not been considered in a proximity discovery setting. This is due to the fact that the channel state between the announcer and monitor device is unknown, which makes the level of interference to the downlink transmissions unknown and therefore preventing the eNodeB to cope with it.
However, our recent result~\cite{7051287} shows that if the interfering signal uses a fixed rate and the downlink receiver uses multi-user decoding, then it is possible to receive the downlink transmission with zero outage, without requiring the base station to have any prior Channel State Information at the Transmitter (CSIT) of the interfering link.
The signals applied in the discovery process use fixed rate, which makes them perfect candidates for underlaying while guaranteeing zero outage to the downlink connection.
In this paper, we extend the result in~\cite{7051287}, such that a fixed rate zero-outage downlink transmission is assured at multiple monitors and provide a new formulation of the result in terms of power adaptation.
The analysis and numerical results show that the underlay operation is indeed possible, but at the cost of an increased downlink power usage that increases with the rate of the underlay transmission.

This paper is organized as follows.
In Section~\ref{sec:network_controlled_proximity_discovery}, we provide an overview of the Model A proximity discovery protocol.
We describe the system model in Section~\ref{sec:system_model}.
The analysis is provided in Section~\ref{sec:analysis}, while the numerical results are given in Section~\ref{sec:numerical_results_and_discussion}.
Finally, in Section~\ref{sec:Conclusion} we conclude the paper.

\section{Network Controlled Proximity Discovery Overview} 
\label{sec:network_controlled_proximity_discovery}

There are three network entities involved on the Model A discovery protocol, namely the announcer, the monitor and the ProSe function (located in the network infrastructure at the EPC).
The ProSe function is responsible for the admission control to ProSe, coordination of the proximity discovery and the monitoring of the ongoing D2D communication links.
In Fig.~\ref{fig:CellularControlledDiscoveryProcedure}, is depicted the four essential steps that an announcer takes to complete the proximity discovery protocol, when authorized to use ProSe.
\begin{figure}[tb]
    \begin{center}
        \includegraphics[width=\linewidth]{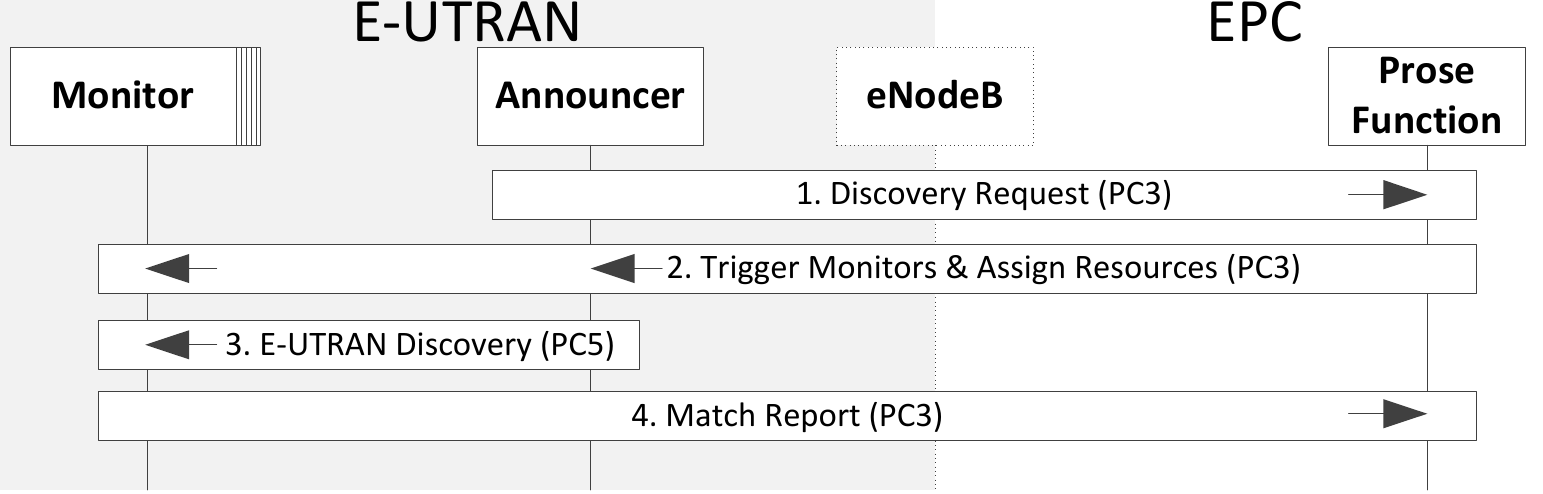}
    \end{center}
    \caption{Overview of model A network controlled ProSe direct discovery~\cite{3GPPTS23.303}. PC3 denotes the logical interface between the ProSe enabled devices and the Prose function at the Evolved Packet Core (EPC), which takes place over the E-UTRAN and EPC. PC5 denotes the logical interface between two Prose enabled devices, which takes place in the E-UTRAN.}
    \label{fig:CellularControlledDiscoveryProcedure}
\end{figure}
In the \emph{Discovery Request} step, the announcer sends a discovery request to the ProSe function.
If this request is accepted, the~\emph{Trigger Monitors \& Assign Resources} step takes place where the Prose function triggers the monitors within the announcer proximity region, assigning and informing the intervening devices (announcer and monitors) which system resources have been allocated to the transmission of the announcer message, i.e. the radio resources over which the PC5 interface will take place.
In the~\emph{E-UTRAN Discovery} step, the announcer transmits its message to the monitors, over the PC5 interface in the E-UTRAN.
Finally, in the \emph{Match Report} step, the monitors that where able to decode successfully the announcer message, transmit the received information to the ProSe function for confirmation of the announcer identity.
After this last step, the direct D2D communication link can be established.

In the remaining of the paper, we focus on the third step of the discovery protocol.
Namely, on the transmission of the announcer message underlaying the downlink transmission targeted to one of the monitors.


\section{System Model} 
\label{sec:system_model}

Here we provide first a general system model overview, followed by a detailed description of the signal and interference model.

\subsection{Overview} 
\label{sub:overview}

We focus our analysis on a single cell network composed by an eNodeB $B$ and multiple cellular devices.
A set of these devices takes the announcer role, while the remaining devices take the monitoring role, listening to the announcer message in the resources assigned for that purpose by the network.
We assume that the announcers are spread enough in space such that their proximity regions do not overlap~\footnote{In other words, the set of proximate monitors to any announcer is disjoint from the set of monitors associated with any other announcers} and that the network assigns different resources to each announcer.
The network is assumed to have knowledge about the positions of each device via periodic device location updates, while being \emph{unaware of the channel conditions between any two devices prior to the proximity discovery procedure}.
Therefore, the network knows which monitors will be potentially in the proximity region\footnote{Two devices can be spatially nearby, but due to the local radio environment characteristics they might not have a radio channel with enough quality to warrant direct device-to-device communications. The verification if the radio link has enough quality is the main purpose of the proximity discovery procedure.}. 

We assume that within the proximity region of an announcer $A$ there are $N$ nearby monitors.
We further assume that each of these monitors has a dedicated ongoing downlink transmission from $B$.
We identify the associated downlink channel and its intended receiver by the same subscript, e.g. the $i^{th}$ channel intended receiver is the $i^{th}$ device. 
We assume that these downlink transmissions have a fixed rate $R_B$, to be met with zero outage.
The fixed downlink rate assumption simplifies the introduction of our power control concept, but we note that an extension to an adaptive downlink rate setting is possible. 

Finally, we assume an OFDMA like based cellular system~\cite{3GPPTR36.843}, where: (i) the synchronization between network devices is assured by the cyclic prefix in each transmission; (ii) the devices are able to receive and decode all the transmissions in the downlink sub-carriers; and (iii) the devices report periodically to $B$ the channel quality of all downlink channels estimated based on the downlink channel pilots.
%
\begin{figure}[tb]
    \begin{center}
        \includegraphics[width=\linewidth]{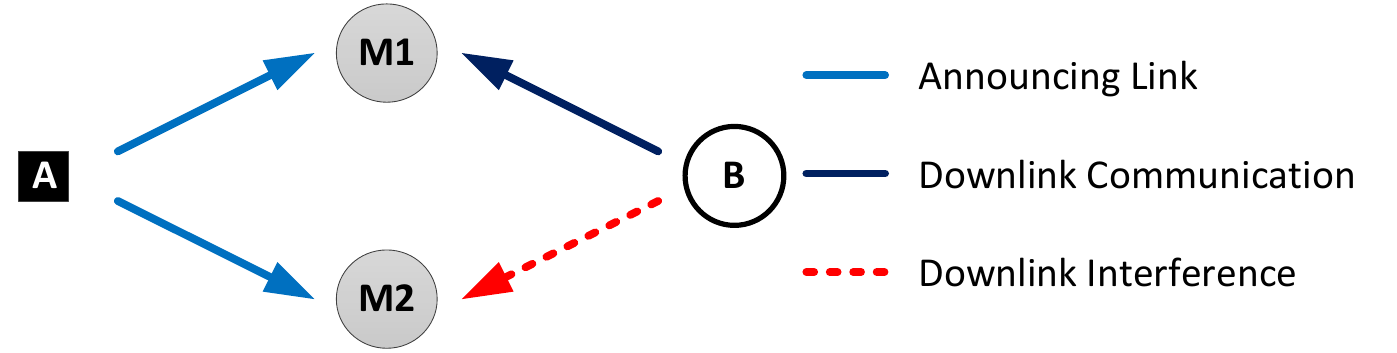}
    \end{center}
    \caption{Signal and interference links model Caption here}
    \label{fig:SignalInterferenceModel}
\end{figure}
\subsection{Signal Model} 
\label{sub:signal_model}

Assuming the announcer transmits in the $k^{th}$ downlink channel, as illustrated in Fig.~\ref{fig:SignalInterferenceModel}, then the composite signal $y_{k,i}$ received in the $k^{th}$ channel at the $i^{th}$ device is,
\begin{equation} \label{yk_withTwoSignals}
    y_{k,i} = h_{A,i} x_{A} + h_{B,i} x_{B} + Z
\end{equation}
where $h_{B,i}$ and $h_{A,i}$ denote respectively the complex channel gains between the $B$ and $A$ transmitters and the $i^{th}$ receiver, while $Z$ is the complex-valued Gaussian noise with variance $E[|Z|^2] = \sigma^2$. 
$x_{B}$ and $x_{A}$ are given by the circular zero-mean Gaussian complex signal transmitted by $A$ and $B$, where the respective variances are $E[|x_{A}|^2] = P_{A}$ and $E[|x_{B}|^2] = P_B$.
Where $P_A$ is assumed to be constant and dimensioned such that the monitors within the proximity region radius are able to detect the announcers transmission with very high probability.
As previously discussed, the downlink transmission is assumed to have a fixed rate $R_B$ that is to be met with zero-outage.
Then to met this requirement, $P_B$ is adapted via the power control scheme exposed in Section~\ref{sec:analysis}, which copes with the downlink channel conditions and the interference from the underlay announcer transmission.

On the downlink channels where only the downlink transmission is present, which corresponds to the orthogonal setting here considered as baseline, the received signal at the $i^{th}$ device is given by,
\begin{equation} \label{yj_withOneSignal}
    y_{j,i} = h_{B,i} x_{B} + Z, \mbox{ with } j \neq k
\end{equation}

All communication links are assumed to be characterized by block Rayleigh fading, i.e. the channel fading gains will not vary within a scheduling epoch.
$g_i = |h_i|^2$ is the channel envelope fading gain, which in the presence of Rayleigh fading follows an exponential distribution.
The associated Probability Density Function (PDF), $f_{g}(x)$, is given by,
\begin{equation}\label{expdistrib}
    f_{g}(x) = \frac{1}{\bar{g}} \exp \left(-\frac{x}{\bar{g}} \right) \mbox{, with } \bar{g} = 1
\end{equation}

We assume that $B$ has a fixed downlink transmission rate and is able to adapt its transmission power based on the Channel State Information available at the Transmitter (CSIT) of the $B-M_i$ links.
We further assume that the absence of any channel state information about the $A-M_i$ links, since prior to the announcer transmission it is assumed no prior communication between $A$ and $M_i$ has taken place.
On the other hand, it is assumed that $B$ knows the fixed transmission rate associated with the proximity announcement transmission, denoted as $R_A$.
All communications are performed using single-user point-to-point capacity-achieving Gaussian codebooks and the instantaneous achievable rate $R_{i}$ is given by the asymptotic Shannon capacity in AWGN, $R_{i} = \C{x} = W\log_2(1 + x)$, where $W$ is the channel bandwidth.
The target Signal-to-Noise Ratio (SNR) $\Gamma_x$, for a given target rate $R_x$, is defined as $\Gamma_x = C^{-1}(R_x) = 2^{R_x/W} - 1$.



\section{Analysis} 
\label{sec:analysis}

In our analysis, we focus on the downlink transmission power expenditure required to enable the underlay operation of the direct proximity discovery protocol.

\subsection{Underlay Fundamentals} 
\label{sub:underlay_fundamentals}

In the underlay proximity discovery setting, the monitor receives simultaneously the transmissions from $B$ and $A$.
Therefore, a two-user Gaussian Multiple Access Channel (MAC)~\cite{gamal2011network} is created at the $i^{th}$ monitor receiver $M_i$.
Denoting the rates of the signals present as $R_B$ and $R_A$, then the MAC defining inequalities are:
\begin{align}\label{eq:MACinequalities}
    R_A &\leq \C{\gamma_A} \nonumber \\ 
    R_B &\leq \C{\gamma_B} \nonumber \\ 
    R_A + R_B &\leq \C{\gamma_A + \gamma_B} 
\end{align}
\begin{figure}[tb]
    \begin{center}
        \includegraphics[width=\linewidth]{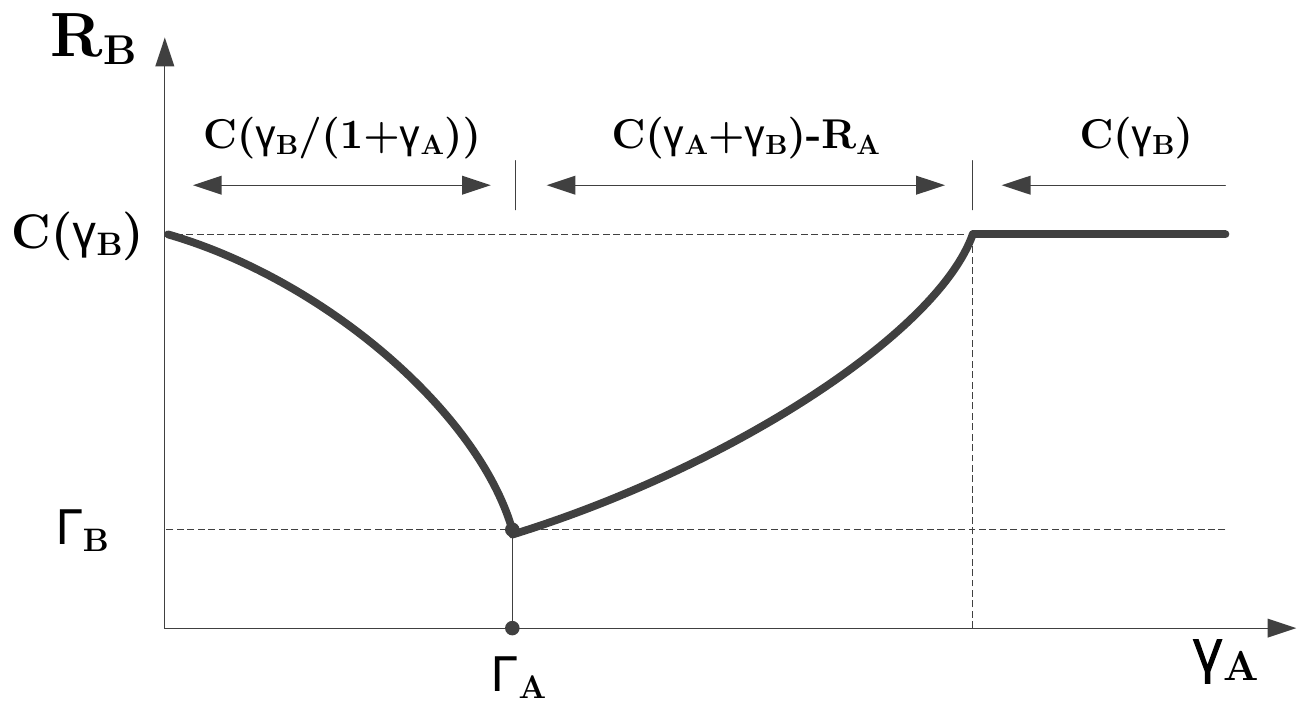}
    \end{center}
    \caption{The achievable $R_B$ when using Joint User Decoding (JD) given known $\gamma_B$ that does not put the $B-M_i$ link in outage, according with the channel realization $\gamma_A$ of the link $A-M_i$.}
    \label{fig:ZeroOutageRateBound}
\end{figure}
In our setting, we assume the announcer transmission to be fixed with rate $R_A$.
In the following for ease of exposition, we assume that $B$ can adapt its rate\footnote{
Later in Section~\ref{sub:underlay_power_control} we fix $B$'s rate and then cast the result derived in Lemma~\ref{prop:GammaB_JUD} to a power control setting that assumes downlink fixed rate.}. 
To illustrate how the maximum decodable $R_B$ is affected by the channel realization of the link $A-M_i$, consider the illustration in Fig.~\ref{fig:ZeroOutageRateBound}.
When the transmission from $A$ is not decodable, $\gamma_A < \Gamma_A$, the maximum decodable $R_B$ decreases, since a larger $\gamma_A$ corresponds to a larger non-decodable interference.
After $\gamma_A$ reaches $\Gamma_A$, the transmission from $A$ becomes decodable and $R_B$ starts to increase due to both signals being jointly decoded. Finally, when $\gamma_A$ becomes high enough, the downlink rate reaches its maximum possible value $R_B = C(\gamma_B)$.
In~\cite{7051287} this scenario as been considered and a crucial result for our setting derived that, when adapted to our context and notation, states the following:
\begin{lem}\label{prop:GammaB_JUD}
    Let there be a transmission with fixed rate $R_A$.
    Let $B$ know the rate $R_A$ and the SNR $\gamma_{B,i}$ at the $i^{th}$ monitor, but not $\gamma_{A}$.
    Then the maximal downlink transmission rate that is always decodable by $M_i$ is $R_B = C(\Gamma_B)$ where
    \begin{equation}\label{eq:GammaBSingleUser}
        \Gamma_B = \frac{\gamma_{B,i}}{1 + \Gamma_A}
    \end{equation}
    and $\Gamma_A=C^{-1}(R_A)$.  
\end{lem}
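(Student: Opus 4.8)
The plan is to read ``always decodable'' as ``decodable for every channel realization $\gamma_A \ge 0$ of the $A$--$M_i$ link,'' and then to find the largest $R_B$ for which $M_i$ can recover $x_B$ no matter what $\gamma_A$ turns out to be. Since $B$ knows $\gamma_{B,i}$ but not $\gamma_A$, I would treat the maximal decodable downlink rate as a function of $\gamma_A$, call it $R_B^{\max}(\gamma_A)$, determine this function explicitly, and then minimize it over $\gamma_A$; the resulting guaranteed rate is the target $R_B=\C{\Gamma_B}$.

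First I would split the range of $\gamma_A$ at the decodability threshold $\Gamma_A=C^{-1}(R_A)$ of the announcer signal. For $\gamma_A<\Gamma_A$ the inequality $R_A\le\C{\gamma_A}$ in~\eqref{eq:MACinequalities} fails, so $M_i$ cannot decode $x_A$ and is forced to treat it as additional Gaussian noise; the residual channel for $x_B$ then has effective SINR $\SNIR{\gamma_B}{\gamma_A}$, giving $R_B^{\max}(\gamma_A)=\C{\SNIR{\gamma_B}{\gamma_A}}$. Because this expression is strictly decreasing in $\gamma_A$, its smallest value over this regime is attained as $\gamma_A\uparrow\Gamma_A$, yielding $\C{\SNIR{\gamma_B}{\Gamma_A}}=\C{\Gamma_B}$.

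Next I would treat the complementary regime $\gamma_A\ge\Gamma_A$, where $x_A$ is decodable and $M_i$ can perform joint (or successive) decoding, so all three MAC constraints in~\eqref{eq:MACinequalities} apply. With $R_A$ fixed, the binding constraint on $R_B$ near the threshold is the sum-rate bound $R_A+R_B\le\C{\gamma_A+\gamma_B}$, giving $R_B^{\max}(\gamma_A)=\C{\gamma_A+\gamma_B}-R_A$, which only grows with $\gamma_A$ (eventually saturating at $\C{\gamma_B}$, as in Fig.~\ref{fig:ZeroOutageRateBound}). Its minimum over this regime is therefore at $\gamma_A=\Gamma_A$, where, using $R_A=\C{\Gamma_A}$ and the identity $\C{\Gamma_A+\gamma_B}-\C{\Gamma_A}=\C{\SNIR{\gamma_B}{\Gamma_A}}$, it again equals $\C{\Gamma_B}$. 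I would also note $\Gamma_B\le\gamma_B$, so the individual bound $R_B\le\C{\gamma_B}$ is never active at the minimizer.

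Combining the two regimes, $R_B^{\max}(\gamma_A)$ is continuous at the threshold and attains its global minimum $\C{\Gamma_B}$ exactly at $\gamma_A=\Gamma_A$; any $R_B$ strictly above this value would be undecodable at that realization, while $R_B=\C{\Gamma_B}$ is decodable for all $\gamma_A$, which establishes maximality. The main obstacle is the worst-case argument at the threshold: I must justify that the decoding strategy genuinely switches from treat-as-noise to joint decoding precisely at $\gamma_A=\Gamma_A$, verify that the two branches meet continuously there so that no smaller value is hidden between them, and confirm the monotonicity of each branch so that $\gamma_A=\Gamma_A$ is truly the minimizer rather than an endpoint artifact.
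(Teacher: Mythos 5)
Your proposal is correct and follows essentially the same route as the paper: the paper's proof (deferring details to~\cite{7051287}) rests precisely on identifying the switching point $\gamma_A = \Gamma_A$ between the treat-interference-as-noise regime and the joint-decoding regime, which is exactly the worst-case minimization over $\gamma_A$ that you carry out explicitly. Your write-up simply fills in the monotonicity of the two branches and the continuity at the threshold via the identity $\C{\Gamma_A + \gamma_B} - \C{\Gamma_A} = \C{\SNIR{\gamma_B}{\Gamma_A}}$, which the paper leaves implicit.
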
 
\begin{proof}
    The proof, exposed in~\cite{7051287}, is based on finding the switching point between the region where the transmission from $B$ is decoded by treating the transmission from $A$ as noise, to the region where both signals are jointly decoded, as depicted in Fig.~\ref{fig:ZeroOutageRateBound}.
    The maximum downlink transmission $R_B = C(\Gamma_{B})$ then corresponds to that same switching point, which occurs when $\gamma_A = \Gamma_A$.
\end{proof}

In other words, when adapting the rate of the signal from $B$ to meet the defined upper bound, then $B$'s transmission can always be decoded, regardless if the transmission from $A$ is decoded. 
An interesting side effect of this rate upper bound, is on how it affects the outage of the announcing device transmission.
When we apply the downlink rate upper bound, all the MAC inequalities are respected, i.e. the signal from B is always decodable.
Then the decodability of $R_A$ will depend solely if $R_A \leq C(\gamma_A)$, i.e. the probability of the transmission from $A$ being decoded is given by $Pr\{\gamma_A \geq \Gamma_A\}$.
\begin{cor}\label{cor:GammaB_JUD}
    Let there be a transmission with rate $R_A$ and let $B$ know the rate $R_A$ and the SNR $\gamma_B$, and select its rate such that $R_B = C(\Gamma_B)$.
    Then the decodability of $R_A$ will depend solely on $R_A \leq C(\gamma_A)$.  
\end{cor}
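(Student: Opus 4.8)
The plan is to verify that, once $B$ fixes its rate to $R_B = \C{\Gamma_B}$ with $\Gamma_B = \SNIR{\gamma_B}{\Gamma_A}$ as prescribed by Lemma~\ref{prop:GammaB_JUD}, the three MAC constraints in~\eqref{eq:MACinequalities} that jointly govern the decodability of the announcer's message collapse into the single-user condition $R_A \leq \C{\gamma_A}$. Joint decodability of $A$'s message requires all three inequalities in~\eqref{eq:MACinequalities} to hold simultaneously, so I would inspect each in turn: the plan is to show that the individual $B$-constraint is always satisfied, while both the individual $A$-constraint and the sum-rate constraint reduce to $R_A \leq \C{\gamma_A}$.

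First I would dispatch the middle inequality $R_B \leq \C{\gamma_B}$. Since $\Gamma_A \geq 0$ we have $\Gamma_B = \SNIR{\gamma_B}{\Gamma_A} \leq \gamma_B$, and because $C(\cdot)$ is strictly increasing this gives $R_B = \C{\Gamma_B} \leq \C{\gamma_B}$ irrespective of the channel realization $\gamma_A$. Hence the $B$-constraint is never binding.

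The only genuine computation lies in the sum constraint. Writing $\C{x} = W\log_2(1+x)$ and factoring the logarithm, the achieved sum rate is
\begin{equation*}
R_A + R_B = W\log_2\bigl[(1+\Gamma_A)(1+\Gamma_B)\bigr] = W\log_2\bigl(1 + \Gamma_A + \gamma_B\bigr) = \C{\Gamma_A + \gamma_B},
\end{equation*}
where the step $(1+\Gamma_A)\Gamma_B = \gamma_B$ is exactly what makes the expression telescope. The sum inequality $R_A + R_B \leq \C{\gamma_A + \gamma_B}$ thus becomes $\C{\Gamma_A + \gamma_B} \leq \C{\gamma_A + \gamma_B}$, which by monotonicity of $C(\cdot)$ is equivalent to $\Gamma_A \leq \gamma_A$, i.e.\ to $R_A \leq \C{\gamma_A}$. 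Since the first MAC inequality is already $R_A \leq \C{\gamma_A}$, all three constraints hold if and only if this single condition does; combined with the fact that Lemma~\ref{prop:GammaB_JUD} guarantees $B$'s message is always decodable under this rate choice, the decodability of $A$ rests solely on $R_A \leq \C{\gamma_A}$. I do not anticipate a real obstacle here — the crux is merely recognizing that, after the factorization above, the sum-rate constraint carries no information beyond the individual $A$-constraint, which is the algebraic signature of the switching point used in Lemma~\ref{prop:GammaB_JUD}.
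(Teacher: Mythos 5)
Your proposal is correct and follows essentially the same route as the paper's proof: both verify the MAC inequalities under the rate choice $R_B = C(\Gamma_B)$, using $\Gamma_B \leq \gamma_B$ for the individual $B$-constraint and the same telescoping identity $C(\Gamma_A) + C(\Gamma_B) = C(\gamma_B + \Gamma_A) \leq C(\gamma_B + \gamma_A)$ (your multiplicative factorization $(1+\Gamma_A)(1+\Gamma_B) = 1 + \Gamma_A + \gamma_B$ written in the log domain) for the sum constraint, with the final inequality holding precisely when $\Gamma_A \leq \gamma_A$. The only difference is organizational: the paper additionally spells out the successive-decoding scenario ($B$ decoded treating $A$ as noise, then subtracted, leaving $A$ decodable iff $R_A \leq C(\gamma_A)$), a case your unified MAC-region argument subsumes at the same level of rigor.
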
 
\begin{proof}
    From Lemma~\ref{prop:GammaB_JUD} it is stated that the transmission from B is always decodable.
    Therefore there are two decoding scenarios:
    (i) $B$ is decoded, by treating $A$ as noise, and subtracted from~\eqref{yk_withTwoSignals}, allowing $A$ to be decoded in the presence of noise if and only if $R_A \leq C(\Gamma_A)$;
    (ii) $A$ is jointly decoded with $B$, which is only true if all the inequalities in~\eqref{eq:MACinequalities} are met.
    The first inequality, $R_A \leq C(\gamma_A)$, is met for joint decoding to be possible.
    The second inequality is met, since $R_B \leq C(\Gamma_B) \leq C(\gamma_B)$.
    Finally, the third inequality is also met,
    \begin{align*}
        R_A + R_B &= C(\Gamma_A) + C(\Gamma_B) \\
                  &= C(\Gamma_A) + C(\gamma_B + \Gamma_A) - C(\Gamma_A) \\
                  &= C(\gamma_B + \Gamma_A) \leq C(\gamma_B + \gamma_A)
    \end{align*}
    where the last inequality comes from $\Gamma_A \leq \gamma_A$, one of the requirements for joint decoding to take place.
\end{proof}
Finally, the decoding here considered is based on a information-theoretic setup and codebooks.
Therefore, to bring this concept to practice one may resort to transmission techniques that are suitable for multiuser decoding.


\subsection{Underlay Downlink Power Control} 
\label{sub:underlay_power_control}

We now cast the zero-outage result provided in Lemma~\ref{prop:GammaB_JUD} to a power control setting, by assuming that each downlink transmission has fixed rate, given by $R_B$.

\subsubsection{Single Monitoring Device} 
\label{sub:single_monitoring_device}

We start by introducing the power control mechanism for the downlink transmission, assuming a single monitor within the proximity region.
Defining the instantaneous SNR of the downlink transmission
\begin{equation*}
    \gamma_B = \frac{P_B g d_B^{-\alpha}}{\sigma^2}
\end{equation*}
where $d_B^{-\alpha}$ is the path loss losses, $\sigma^2$ is the noise power, $g = |h|^2$ is the downlink channel gain realization and $P_B$ the downlink transmission power.
The asymptotic capacity on a Gaussian channel is then,
\begin{align*}
    R_B &=W \log_2 \left( 1 + \frac{\gamma_B}{1 + \Gamma_A}\right) \\ \nonumber
        &=W \log_2 \left( 1 + \frac{P_B g d_B^{-\alpha}}{\sigma^2} \cdot \frac{1}{1 + \Gamma_A}\right)
\end{align*}
where $W$ is the bandwidth of the assigned resources.
The term $\frac{1}{1 + \Gamma_A}$ accounts for the rate penalty due to the underlay announcer transmission, so to achieve zero-outage in the downlink as stated in Lemma~\ref{prop:GammaB_JUD}.  
The required power $P_B$ to meet the fixed rate $R_B$ is then given by,
\begin{equation}\label{eq:PowerInversion}
    P_B = \frac{1}{g} \cdot K \mbox{ where } K = \frac{\sigma^2 (1 + \Gamma_A)}{d_B^{-\alpha}} \cdot \left[ 2^{R_B/W} - 1\right].
\end{equation}
To cope with deep fades, a truncated channel inversion strategy is put in place~\cite{775366}.
In this setting, $B$ will only attempt transmission when the fade depth is above a cutoff threshold $\mu$, i.e. when $g > \mu$.
This link outage event is then,
\begin{equation}\label{eq:BaselineLinkOutage}
    Pr\{ g < \mu \} = \int_0^\mu f_g (x) dx = 1 -\exp(-\mu/\bar{g})
\end{equation}
where $\mu = -\bar{g} \log(1 - Pr\{ g < \mu \})$.
We note that although the downlink transmission will not occur in this event, the announcer transmission will still occur.
In other words, the announcer transmission is decodable with probability $Pr(\gamma_A \geq \Gamma_A)$.
The average transmit power required to sustain $R_B$, is then,
\begin{align}
    E[P_B] = \int_{\mu}^{\infty} \frac{1}{g} \cdot K f_g (g) d g_i = E_1\left(\frac{\mu}{\bar{g}}\right) \cdot K 
\end{align}
where $E_1(.)$ is the exponential integral function.

\subsubsection{Multiple Monitoring Devices} 
\label{sub:multiple_monitoring_devices}

When multiple monitoring devices are present within the proximity region, then the $P_B$ needs to be computed for the worst link.
This is the case since $R_B$ needs to be decodable at all the monitors within the proximity region.
The required power then becomes,
\begin{equation}
    P_B = \frac{1}{\min(g_1,...,g_N)} \cdot K
\end{equation}
where $g_i$ is the downlink channel gain of the $i^{th}$ monitor and where $K$ is assumed for analytical tractability to be the same for all monitors.
To account for the cutoff value we introduce the $n^{th}$ order statistics distribution:
\begin{equation*}
    f_n(x) = N f_g(x) \binom {N-1}{n-1} F(x)^{n-1} (1 - F(x))^{N-n}
\end{equation*}
from which we want $n = 1$, the distribution of the worse channel:
\begin{equation}
    f_1(x) = \frac{N}{\bar{g}} \exp\left(-\frac{N x}{\bar{g}}\right)
\end{equation}
The outage event is then defined as,
\begin{align}
    Pr\{ \min(g_1,...,g_N) < \mu \} &= \int_0^\mu f_1 (x) dx \\ \nonumber
                                    &= 1 -\exp(-N\mu/\bar{g})
\end{align}
and the associated cutoff threshold as,
\begin{equation}
    \mu = -\frac{\bar{g}}{N} \log \left(1 - Pr\{ \min(g_1,...,g_N) < \mu \} \right)
\end{equation}

The mean required $P_B$ in this setting is then given by,
\begin{equation}
    E[P_B | N] = E_1\left(\frac{N \mu}{\bar{g}}\right) \cdot N \cdot K
\end{equation}
%

\subsubsection{Multiple Channels} 
\label{sub:multiple_channels}

Now we assume that within the proximity region there are $N$ monitors, each with an associated downlink channel.
We further assume that each of these channels has different gains at each monitor.
In this setting, the channel chosen for the underlay operation should be the one that will require less expended power.

Assuming that out of the $N$ monitors the one with the worst channel gain in the $i^{th}$ channel is given by $u_i = \min(g_{1,i},...,g_{N,i})$, where $g_{n,i}$ corresponds to the downlink channel gain of the $i^{th}$ channel at the $n^{th}$ device.
Then, the channel that will lead to less power expenditures is the one with channel gain given by $u_{max} = \max\left(u_1,\cdots,u_N\right)$.
In other words, assuming the monitor with the worst channel conditions in each channel as the limiting factor, then the selection of the best channel in this conditions is modeled by the order statistics of the maximum value.
The corresponding density function, $f_N(x)$, is given by,
\begin{align}
    f_N(x) &= N f_1(x) F_1(x)^{N-1} 
\end{align}
and the cumulative function $F_N(x)$ given by,
\begin{equation}
    F_N(x) = \left[ 1 - \exp\left( - \frac{N x}{\bar{g}}\right) \right]^M
\end{equation}

The outage event is then defined by,
\begin{align}\label{eq:outageWorseChannel}
    Pr\{ u_{max} < \mu \} = F_N(\mu) = \left( 1 -  \exp\left( -\frac{N \mu}{\bar{g}}  \right) \right)^N
\end{align}
where
\begin{equation}\label{eq:cutoffWorseChannel}
    \mu = -\frac{\bar{g}}{N} \log(1 - Pr\{ u_{max} < \mu \}^{1/N})
\end{equation}

In these conditions the average power used is given by,
\begin{equation}
    E[P_B | N] = K \int_{\mu}^{\infty} \frac{1}{x} f_N(x) dx
\end{equation}
which can be computed via numerical integration.

The sum-rate per used resource, $S$, focused on the worst channel, is given by,
\begin{equation}\label{eq:SumRateUnderlay}
    S = R_B \cdot Pr\{ u_{max} \geq \mu \} + R_A \cdot Pr\{\gamma_A \geq \Gamma_A\}.
\end{equation}
As measure of energy efficiency we consider the amount of energy required to transmit a bit of information, $\psi$, given by,
\begin{align}
    \psi &= E\left[\frac{P_B}{S}\right] = \frac{K}{S} \int_{\mu}^{\infty} \frac{1}{x} f_N (x) d x
\end{align}
which can be computed via numerical integration.



\subsection{Orthogonal Downlink Power Control} 
\label{sub:baseline}

We now consider as baseline the setting where the announcer and downlink transmissions are allocated orthogonal resources by the network.
Starting from the asymptotic capacity on a gaussian channel at the $i^{th}$ device,
\begin{equation}
    R_B = W \log_2 \left( 1 + \frac{P_B g_i d_B^{-\alpha}}{\sigma^2}\right)
\end{equation}
the required downlink power $P_B$ to met the fixed rate $R_B$ is then given by,
\begin{equation}\label{eq:BaselineChannelInversion}
    P_B = \frac{1}{g_i} \cdot \frac{\sigma^2 \left[ 2^{R_B/W} - 1\right]}{r^{-\alpha}}.
\end{equation}
To provide a fair comparison we focus on the power required to serve the worst downlink channel.
In this setting, the link outage event is provided by~\eqref{eq:outageWorseChannel} and the associated cutoff threshold by~\eqref{eq:cutoffWorseChannel}.
The average transmit power required to sustain $R_B$, is then,
\begin{align}
    E[P_B] &= \int_{\mu}^{\infty} \frac{1}{x} \cdot \frac{\sigma^2 \left[ 2^{R_B/W} - 1\right]}{d_B^{-\alpha}} f_N(x) d x
\end{align}
which can be computed numerically.

The sum-rate per used resource $S$, from the worst channel and the channel dedicated for the announcing device transmission, is given by,
\begin{equation}\label{eq:SumRateOrthogonal}
    S = \frac{R_B \cdot Pr\{ u_{max} \geq \mu \} + R_A \cdot Pr\{\gamma_A \geq \Gamma_A\}}{2}
\end{equation}
and the associated energy efficiency, $\psi$, given by,
\begin{align}
    \psi &= E\left[\frac{P_B}{S}\right] = \frac{1}{S} \int_{\mu}^{\infty} \frac{1}{x} \cdot \frac{\sigma^2 \left[ 2^{R_B/W} - 1\right]}{d_B^{-\alpha}} f_N(x) d x.
\end{align}
%

\section{Numerical Results and Discussion} 
\label{sec:numerical_results_and_discussion}
\begin{table}[t]
    \centering
        \begin{tabular}{ l c l c }
        \hline      
        \textbf{Parameter} & \textbf{Value} & \textbf{Parameter} & \textbf{Value}\\ \hline
        $W$ & 180 [kHz] & $T_A$ & 5 [ms] \\
        $d_B$ & 200 [m] & $d_A$ & 20  [m] \\
        $P_A$ & 20 [dBm] & $\sigma^2$ & -97 [dBm] \\
        $\alpha$ & 4 & $Pr\{\gamma_A \geq \Gamma_A\}$ & 0.99 \\
        $R_B$ & 5 $[B/s^2]$ & $Pr\{ u_{max} \geq \mu \}$ & 0.99 \\
        $N$ & 20 & A's Payload & $[100,1100]$ [B]\\
        \hline
        \end{tabular}
    \caption{Simulation scenario settings.}
    \label{tab:SimulationScenarioSettings}
\end{table}
\begin{figure}[t]
    \begin{center}
        \includegraphics[width=\linewidth]{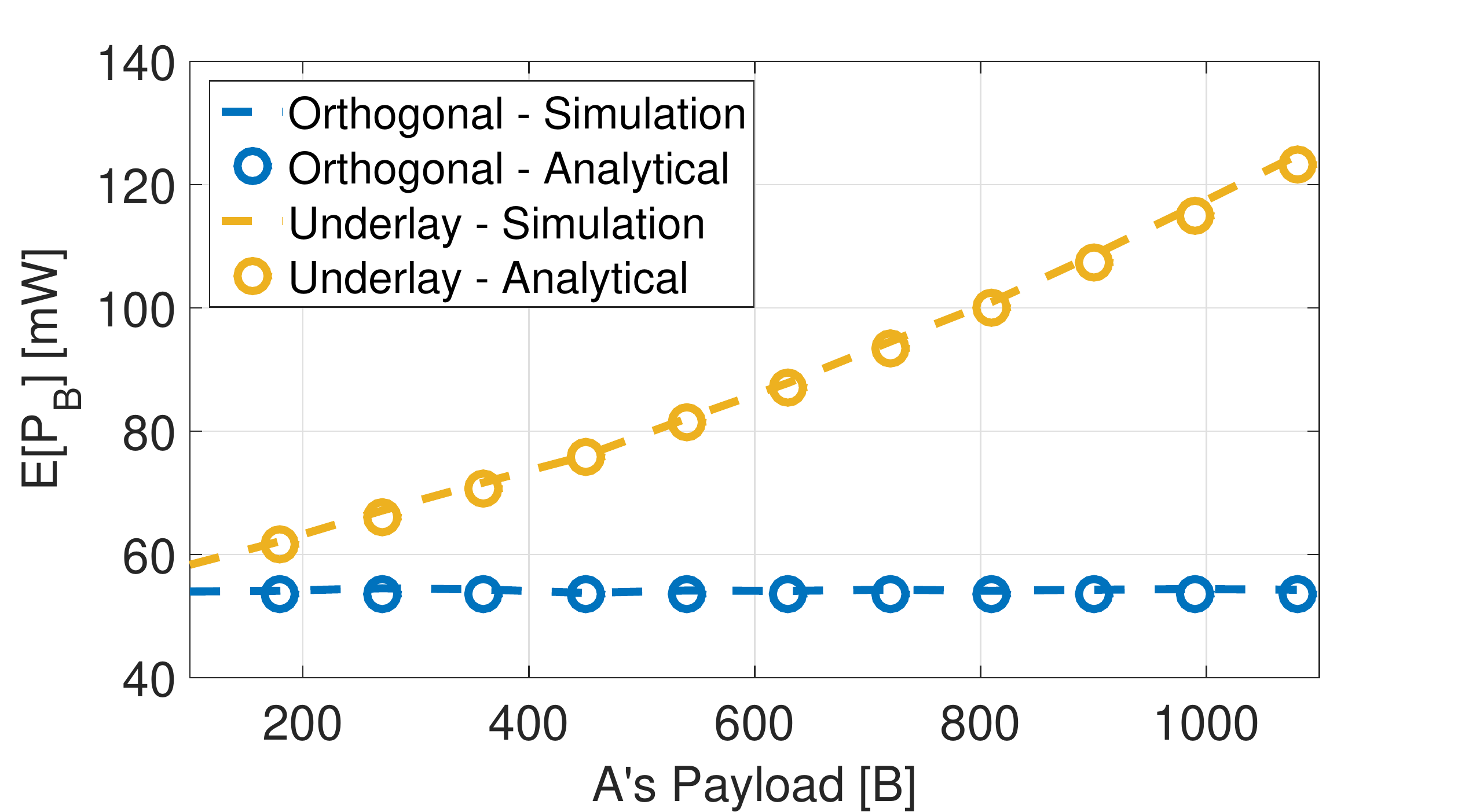}
    \end{center}
    \caption{Base station power consumption comparison between underlay and orthogonal discovery, for multiple monitoring devices.}
    \label{fig:PowerConsumption}
\end{figure}
In this section we present the numerical results corresponding to the performance of the proximity discovery in the underlay setting, in terms of expended power $E[P_B]$ and energy efficiency $\psi$ versus the announcer payload (computed as $R_A \cdot W \cdot T_A$, where $T_A$ is the duration of the announcer transmission).
These numerical results were obtained by evaluating the corresponding analytical expressions, derived in the previous section and through stochastic simulations.
Table~\ref{tab:SimulationScenarioSettings} lists the relevant assumed system parameters.
We first observe, that the sum-rate achieved in the underlay setting will always be higher than in the orthogonal resources setting, as can been seen when comparing~\eqref{eq:SumRateUnderlay} with~\eqref{eq:SumRateOrthogonal}.
This higher sum-rate comes at a cost of extra downlink transmission expenditure when compared to the orthogonal resources setting, as shown in Fig.~\ref{fig:PowerConsumption}. Specifically, in the underlay setting the higher is the announcers rate (and corresponding payload) the higher will the required downlink transmission power be.
Meanwhile from an energy efficiency perspective, in Fig.~\ref{fig:EnergyEfficiency} we observe that for lower announcer payloads and despite the higher downlink power required, the underlay scheme is more energy efficient.
This is the case, since the penalty introduced by the term $(1 + \Gamma_A)$ in~\eqref{eq:PowerInversion} is for low payloads low enough to not offset significantly the sum-rate gains in the underlay setting.


\section{Conclusion}
\label{sec:Conclusion}

Proximity discovery is an essential enabler for D2D communications.
We have shown that reliable underlay proximity discovery is feasible, achieving higher sum-rates than in the orthogonal resources setting, at the cost of an higher amount of expended downlink power.
On the other hand, we have observed that for low announcer payloads the underlay discovery is up to twice more energy efficient than the orthogonal setting.
In terms of future work direction, the proposed underlay proximity discovery should be integrated in a system level scheduler and extended to a multi-antenna setting.
\begin{figure}[t]
    \begin{center}
        \includegraphics[width=\linewidth]{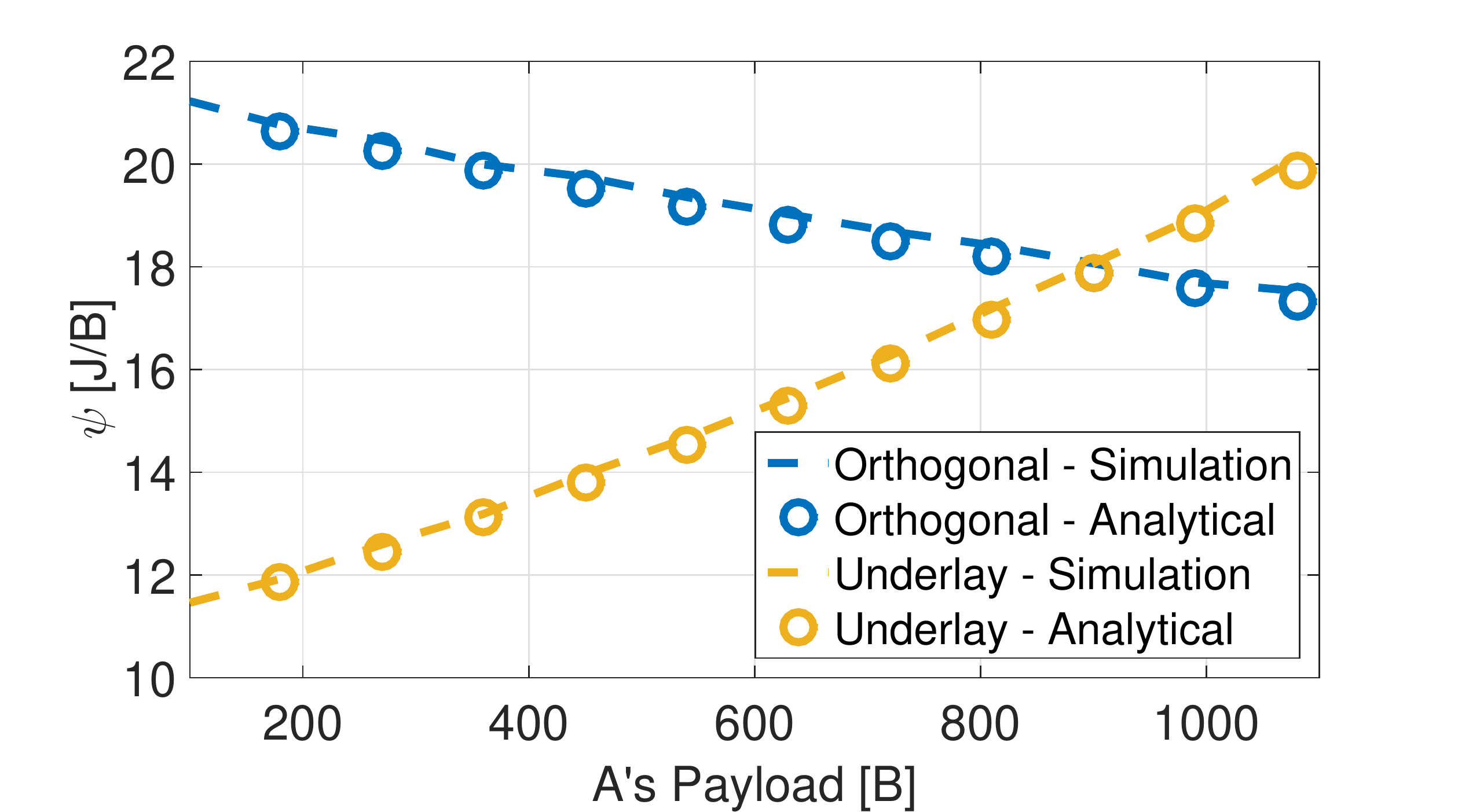}
    \end{center}
    \caption{Energy efficiency comparison between underlay and orthogonal discovery, for multiple monitoring devices.}
    \label{fig:EnergyEfficiency}
\end{figure}
\section*{Acknowledgment}
Part of this work has been performed in the framework of the Horizon 2020 project FANTASTIC-5G (ICT-671660), which is partly funded by the European Union. The authors would like to acknowledge the contributions of their colleagues in FANTASTIC-5G. 

\bibliographystyle{ieeetr}

\begin{thebibliography}{10}

\bibitem{6163598}
G.~Fodor, E.~Dahlman, G.~Mildh, S.~Parkvall, N.~Reider, G.~Miklo´s, and
  Z.~Tura´nyi, ``Design aspects of network assisted device-to-device
  communications,'' {\em Communications Magazine, IEEE}, vol.~50, no.~3,
  pp.~170--177, 2012.

\bibitem{DBLP:journals/corr/AsadiWM13}
A.~Asadi, Q.~Wang, and V.~Mancuso, ``A survey on device-to-device communication
  in cellular networks,'' {\em Communications Surveys Tutorials, IEEE},
  vol.~16, pp.~1801--1819, Fourthquarter 2014.

\bibitem{6807945}
X.~Lin, J.~Andrews, A.~Ghosh, and R.~Ratasuk, ``An overview of 3gpp
  device-to-device proximity services,'' {\em Communications Magazine, IEEE},
  vol.~52, pp.~40--48, April 2014.

\bibitem{6231164}
L.~Lei, Z.~Zhong, C.~Lin, and X.~Shen, ``Operator controlled device-to-device
  communications in lte-advanced networks,'' {\em Wireless Communications,
  IEEE}, vol.~19, pp.~96--104, June 2012.

\bibitem{Fantastic5G}
F.~Schaich, B.~Sayrac, M.~Schubert, H.~Lin, K.~Pedersen, M.~Shaat, G.~Wunder,
  and A.~Georgakopoulos, ``{FANTASTIC-5G}: {5G-PPP} {P}roject on {5G} air
  interface below 6 {GH}z,'' {\em European Conference on Network and
  Communications}, June 2015.

\bibitem{3GPPTS23.303}
3GPP, ``{TS} 23.303 - proximity-based services (prose); stage 2,'' tech. rep.,
  3GPP, 2014.

\bibitem{3GPPTR36.843}
3GPP, ``{TR} 36.843 - study on lte device to device proximity services-radio
  aspects,'' tech. rep., 3GPP, 2013.

\bibitem{gamal2011network}
A.~Gamal and Y.~Kim, {\em Network Information Theory}.
\newblock Cambridge University Press, 2011.

\bibitem{Pratas2014}
N.~K. Pratas and P.~Popovski, ``Underlay of low-rate machine-type d2d links on
  downlink cellular links,'' in {\em IEEE International Workshop on M2M
  Communications for Next Generation IoT at IEEE International Conference on
  Communications (ICC) 2014}, 2014.

\bibitem{PekkaJANIS2009}
P.~Janis, C.-H. Yu, K.~Doppler, C.~Ribeiro, C.~Wijting, K.~Hugl, O.~Tirkkonen,
  and V.~Koivunen, ``Device-to-device communication underlaying cellular
  communications systems,'' {\em International J. of Communications, Network
  and System Sciences}, vol.~2, pp.~pp. 169--178, 2009.

\bibitem{5350367}
K.~Doppler, M.~Rinne, C.~Wijting, C.~Ribeiro, and K.~Hugl, ``Device-to-device
  communication as an underlay to lte-advanced networks,'' {\em Communications
  Magazine, IEEE}, vol.~47, no.~12, pp.~42--49, 2009.

\bibitem{5073734}
C.-H. Yu, O.~Tirkkonen, K.~Doppler, and C.~Ribeiro, ``On the performance of
  device-to-device underlay communication with simple power control,'' in {\em
  Vehicular Technology Conference, 2009. VTC Spring 2009. IEEE 69th}, pp.~1--5,
  2009.

\bibitem{7051287}
N.~Pratas and P.~Popovski, ``Zero-outage cellular downlink with fixed-rate d2d
  underlay,'' {\em Wireless Communications, IEEE Transactions on}, vol.~PP,
  no.~99, pp.~1--1, 2015.

\bibitem{775366}
M.-S. Alouini and A.~Goldsmith, ``Capacity of rayleigh fading channels under
  different adaptive transmission and diversity-combining techniques,'' {\em
  Vehicular Technology, IEEE Transactions on}, vol.~48, pp.~1165--1181, Jul
  1999.

\end{thebibliography}

\end{document}